\newtheorem{theorem}{Theorem}
\newtheorem{lemma}[theorem]{Lemma}
\newtheorem{corollary}[theorem]{Corollary}
\theoremstyle{definition}
\newtheorem{definition}[theorem]{Definition}
\newcommand\lr[1]{\left( #1 \right)}
\newcommand\lrv[1]{\left|  #1 \right|}
\newcommand\lrb[1]{\left\lbrace #1 \right\rbrace}
\newcommand{\mb}[1]{\mathbf{#1}}
\newcommand{\tdots}{\,..\,}
\def\A{{\cal A}}
\newcommand{\BC}{\lrb{0,1}}
\newcommand{\mbb}[1]{\mathbb{#1}}
\newcommand{\CS}{_{\mathrm {CS}}}
\begin{document}

\title{Almost quadratic gap between partition complexity and query/communication complexity}
\date{}
\author{Andris Ambainis$^1$
\and Martins Kokainis$^1$
}
\maketitle

\begin{abstract}
\noindent
We show nearly quadratic separations between two pairs of complexity measures:
\begin{itemize}
\item
We show that there is a Boolean function $f$ with $D(f)=\Omega((D^{sc}(f))^{2-o(1)})$ where $D(f)$ is 
the deterministic query complexity of $f$ and $D^{sc}$ is the subcube partition complexity of $f$;
\item
As a consequence, we obtain that there is $f(x, y)$ such that $D^{cc}(f)=\Omega(\log^{2-o(1)}\chi(f))$ 
where $D^{cc}(f)$ is the deterministic 2-party communication complexity of $f$ (in the standard 2-party model
of communication) and $\chi(f)$ is the partition number of $f$.
\end{itemize}
Both of those separations are nearly optimal: it is well known that $D(f)=O((D^{sc}(f))^{2})$
and $D^{cc}(f)=O(\log^2\chi(f))$. 
\end{abstract}

\footnotetext[1]{Faculty of Computing, University of Latvia.
E-mail: {\tt andris.ambainis@lu.lv, martins.kokainis@lu.lv}.
Supported by the European Commission FET-Proactive project QALGO, ERC Advanced Grant MQC and Latvian State Research programme NexIT project No.1.}


\setcounter{page}{0}
\thispagestyle{empty}

\newpage
\section{Introduction}

Both query complexity and communication complexity of a function $f$ can be lower bounded by
an appropriate measure of complexity on partitions of the input set 
with the property that $f$ is constant on every part of the partition.

In the communication complexity setting, the {\em partition number} of $f:X\times Y\rightarrow \{0, 1\}$ 
(denoted by $\chi(f)$) is the smallest number of rectangles $X_i \times Y_i$ in a partition of $X\times Y$ 
with the property that $f$ is constant (either 0 or 1) on every $X_i \times Y_i$.
If a deterministic communication protocol communicates $k$ bits, it specifies 
a partition into $2^k$ rectangles. Therefore \cite{Yao79}, $D^{cc}(f)\geq \log \chi(f)$ where $D^{cc}(f)$ 
is the deterministic communication complexity of $f$ in the standard two-party model.  

The corresponding notion in query complexity is the {\em subcube partition complexity} $D^{sc}(f)$
of a function $f:\{0, 1\}^n\rightarrow \{0, 1\}$ \cite{FKW,Kothari}. 
It is defined as the smallest $k$ for which $\{0, 1\}^n$ can
be partitioned into subcubes $S_i$ so that each $S_i$ defined by fixing at most $k$ variables and, on each $S_i$, $f$ is a constant.
Again, it is easy to see that $D^{sc}(f)$ is a lower bound for the deterministic decision tree complexity $D(f)$
(since any deterministic decision tree defines a partition of $\{0, 1\}^n$).

Although we do not consider randomized complexity in this paper, we note that 
both of these measures have randomized counterparts \cite{FKW,JK,JLV} which provide lower bounds for randomized communication
complexity and randomized query complexity.

We study the question: how tight are the partition lower bounds? 

For communication complexity, Aho et al. \cite{AUY83} showed that $D^{cc}(f)=O(\log^2 \chi(f))$, by observing that $D(f)$ is upper-bounded by
the square of the non-deterministic communication complexity which, in turn, is at most $\log \chi(f)$.
Since then, it was an open problem to determine whether any of these two bounds is tight. 


For query complexity, it is well known that $D(f) = O(C^2(f))$ 
(where $C(f)$ is the standard certificate complexity without the unambiguity requirement) \cite{BW}. Since $C(f)\leq D^{sc}(f)$ , we have $D(f) = O( \lr{D^{sc}(f)}^{2})$. 
On the other hand, Savicky \cite{Sav02} has constructed a function $f$ with $D(f) = \Omega( \lr{D^{sc}(f)}^{1.128...})$.

Recently, G\"o\"os, Pitassi and Watson \cite{GPW15} made progress on these longstanding open questions, 
by constructing a function $f$ with $D(f) = \tilde{\Omega}( \lr{D^{sc}(f)}^{1.5})$ and showing that 
a separation between $D(f)$ and $D^{sc}(f)$ can be ``lifted" from query complexity to the communication complexity.
Thus, $D^{cc}(f)=\tilde{\Omega}(\log^{1.5} \chi(f))$

We improve this result by constructing $f$ with $D(f)=\Omega( \lr{D^{sc}(f)}^{2-o(1)})$ (which almost matches
the upper bound of $D(f)=O( \lr{D^{sc}(f)}^{2})  $).
This also implies a similar separation between $D^{cc}(f)$ and $\log \chi(f)$,
via the lifting result of \cite{GPW15}.

Our construction is based on the {\em cheat-sheet} method that was very recently 
developed by Aaronson, Ben-David and Kothari \cite{ABK} to give better separations between query complexity in different models of computation
and related complexity measures. (In particular, they used cheat sheets to give the first superquadratic separation between
quantum and randomized query complexities, for a total Boolean function.) 

The cheat-sheet method takes a function $f$ and produces a new function $f_{CS}$ consisting of a composition of 
several copies of $f$, together with ``cheat-sheets" that allow a quick verification of values of $f$, given a pointer
to the right cheat sheet. In section \ref{sec:res}, we observe that cheat-sheet functions have the property that $f_{CS}^{-1}(1)$
can be partitioned into subcubes of dimensions that can be substantially smaller than $D^{sc}(f)$.

This property does not immediately imply a better separation between $D$ and $D^{sc}$, because the cheat-sheet construction
does not give a similar partition for $f_{CS}^{-1}(0)$. However, we can compose the cheat-sheet construction with several 
rebalancing steps which rebalance the complexity of partitions for $f_{CS}^{-1}(0)$ and $f_{CS}^{-1}(1)$.
Repeating this composed construction many times gives $D(f)=\Omega( \lr{D^{sc}(f)}^{2-o(1)})$.

\section{Preliminaries}

We use $[n]$ to denote $\{1, 2, \ldots, n\}$ and $\log n$ to denote $\log_2 n$. We use the big-$\tilde{O}$ notation which is a counterpart of 
big-$O$ notation that hides polylogarithmic factors:
\begin{itemize}
\item
$f(n)=\tilde{O}(g(n))$ if $f(n) = O(g(n) \log^c g(n))$ for some constant $c$ and
\item
$f(n)=\tilde{\Omega}(g(n))$ if $f(n)= \Omega(\frac{g(n)}{\log^c g(n)})$ for some constant $c$.
\end{itemize}

\subsection{Complexity measures for Boolean functions}

We study the complexity of Boolean functions $f:\{0, 1\}^n \rightarrow \{0, 1\}$
(or, more generally, functions $f:\Sigma_1 \times \Sigma_2 \times \ldots \times \Sigma_n \rightarrow \{0, 1\}$
where $\Sigma_i$ are finite sets of arbitrary size).
We denote the input variables by $x_1, \ldots, x_n$.
Then, the function is $f(x_1, \ldots, x_n)$.
Often, we use $x$ as a shortcut for $(x_1, \ldots, x_n)$ and $f(x)$ as a 
shortcut for $f(x_1, \ldots, x_n)$. 

$AND_n$ denotes the function $AND_n(x_1, \ldots, x_n)$ which is 1 if $x_1=\ldots=x_n=1$
and 0 otherwise. $OR_n$ denotes the function $OR_n(x_1, \ldots, x_n)$ which is 1 if $x_i=1$
for at least one $i$ and 0 otherwise.  

For functions $f:\{0, 1\}^n \rightarrow \{0, 1\}$ and $g:\Sigma_1 \times \ldots \Sigma_k \rightarrow \{0, 1\}$,
their composition is the function $f\circ g^n: (\Sigma_1 \times \ldots \Sigma_k)^n \rightarrow \{0, 1\}$
defined by  
\[ f\circ g^n (x_{11}, \ldots, x_{1k}, \ldots, x_{n1}, \ldots, x_{nk}) = 
f(g(x^{(1)}), \ldots, g(x^{(n)})) \]
where $x^{(i)}$ denotes $(x_{i1}, \ldots, x_{ik})$.

We consider the complexity of Boolean functions in the decision tree (query) model.
More information on this model can be found in the survey by Buhrman and de Wolf \cite{BW}.
A summary of recent results (that appeared after \cite{BW} was published) can be found in \cite{ABK}.

\smallskip

{\bf Deterministic query complexity.}
A {\em deterministic query algorithm} or {\em deterministic decision tree} is a deterministic 
algorithm $\A$ which accesses the input $(x_1, \ldots, x_n)$ by querying the input variables $x_i$. 
We say that $\A$ computes a function $f(x_1, \ldots, x_n)$ if, for any $(x_1, \ldots, x_n)\in\{0, 1\}^n$,
the output of $\A$ is equal to $f(x_1, \ldots, x_n)$. The {\em deterministic query complexity} of $f$, $D(f)$,
is the smallest $k$ such that there exists a deterministic query algorithm $\A$ that computes $f(x_1, \ldots, x_n)$ and, on every $(x_1, \ldots, x_n)$, queries at most $k$ input variables $x_i$.

{\bf Partial assignments.}
A \emph{partial assignment} in $ \Sigma_1 \times \Sigma_2 \times \ldots \Sigma_n $, where $ \Sigma_i $ are   finite alphabets, is a function   $a : I_a \to  \bigcup_{i=1}^n \Sigma_i $, where $ I_a \subset [n]   $, satisfying $ a(i) \in \Sigma_i $ for all $ i \in I_a$.
We say that an assignment $a$ {\em fixes a variable} $x_i$ if $i\in I_a$.
The length of a partial  assignment is the size of the set $I_a$.
A string $x \in  \Sigma_1 \times \Sigma_2 \times \ldots \Sigma_n $ is \emph{consistent} with the assignment $a$ if  $x_i =a(i)$ for all $ i \in I_a$; then we denote $ x \sim a$.   
Every partial assignment defines an associated \emph{subcube}, which is the set of all  strings consistent with that assignment:
\[ 
S_C = \lrb{x \in  \Sigma_1 \times \Sigma_2 \times \ldots \Sigma_n  \ \vline\  x \sim a }.
 \]  
In the other direction, for every subcube there is a unique partial assignment that defines it.

{\bf Certificate complexity.}
For $b\in \BC$, a \emph{$b$-certificate} for a function $f :  \Sigma_1 \times \Sigma_2 \times \ldots \Sigma_n \to\BC$ is a partial assignment $ a$ such that  $ f(x) = b$ for all $x\sim a$.
The \emph{$b$-certificate complexity} of $f$ (denoted by $ C_b(f) $) is the smallest number $k$ such that,
for every $x:f(x)=b$, there exists a $b$-certificate $a$ of length at most $k$ with $x\sim a$.
The \emph{certificate complexity} of $f$ (denoted by $ C(f) $) is the maximum of $ C_0(f) $, $ C_1(f) $.

Equivalently, we can say that $C_b(f)$ is the smallest number $k$ such that the set $f^{-1}(b)$ can be 
written as a union of subcubes $S_C$ corresponding to partial assignments of length at most $k$. 
 
{\bf Unambiguous certificate complexity.}
The \emph{unambiguous $b$-certificate complexity} of $f$  (denoted by $ UP_b(f) $) is the smallest number $k$ 
such that we can choose a collection of $b$-certificates $a_1, \ldots, a_m$ of length at most $k$ 
with the property that, for every $x:f(x)=b$, there is exactly one $a_i$ with $x\sim a_i$.
Equivalently, $UP_b(f)$ is the smallest number for which the set $f^{-1}(b)$ can be written as a disjoint union of subcubes defined by fixing at most $k$ variables. 

The \emph{deterministic subcube partition complexity} of $f$ (denoted by $ D^{sc}(f) $ \cite{Kothari}) 
is defined as maximum of $ UP_0(f) $, $ UP_1(f) $ and is the smallest $k$ for which $\Sigma_1 \times \Sigma_2 \times \ldots \Sigma_n$ can be written as a disjoint union of subcubes obtained by fixing at most $k$ variables, with $f$ being constant on every subcube.
$D^{sc}(f)$ is also known as {\em two-sided unambiguous certificate complexity} \cite{GPW15} and has been studied 
under several other names (from {\em non-overlapping cover size} \cite{BOH} to 
{\em non-intersecting complexity} \cite{Belovs}).

\subsection{Technical lemmas}

Let $ f (x_1, \ldots, x_n, y_1, \ldots, y_k) :  \BC^n \times [N]^k \to \BC $ 
be a function with some $\{0, 1\}$-valued variables and some variables which take values in a larger set $[N]$.

We can 'Booleanize' $f$ in a following way. Let $d=\lceil \log N \rceil$. We fix a mapping $h$ from $\BC^{d}$ onto $[N]$ and define 
$ \tilde f : \BC^{n + k d}  \to \BC$ with variables $x_i, i\in[n]$ and $y_{ij}, i\in[k], j\in [d]$ by
\[ \tilde f(x_1, \ldots, x_n, y_{11}, \ldots, y_{kd}) = f(x_1, \ldots, x_n, h(y_{11}, \ldots, y_{1d}), \ldots, h(y_{k1}, \ldots, y_{kd})) .\] 
Similarly to equation (3) in \cite{GPW15}, we have

\begin{lemma}\label{th:booleanization}
For any measure $ m \in \lrb{D,UP_0, UP_1, C_0, C_1} $,
\[ 
m(\tilde f ) \leq  m(f)  \cdot  \lceil \log  N \rceil.
\]
Moreover, $ D(f) \leq  D(\tilde f )   $.
\end{lemma}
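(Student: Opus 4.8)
The plan is to prove the two directions separately; each is a short simulation argument, and the surjection $h$ gets used in opposite ways in the two directions. I would keep the bookkeeping light, but there is one place — the ``moreover'' — where the choice of preimages under $h$ has to be handled with a little care.

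For $D(\tilde f)\le D(f)\lceil\log N\rceil$ I would take an optimal decision tree $\A$ for $f$ and run it on an input of $\tilde f$, answering every query to a Boolean variable $x_i$ directly and, whenever $\A$ queries a large-alphabet variable $y_j$, querying the whole block $y_{j1},\ldots,y_{jd}$ (with $d=\lceil\log N\rceil$) and feeding $h(y_{j1},\ldots,y_{jd})$ to $\A$ in place of $y_j$. By the definition of $\tilde f$ this computes $\tilde f$ correctly, and each of the at most $D(f)$ queries of $\A$ is simulated by at most $d$ queries, so $D(\tilde f)\le D(f)\cdot d$.

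For $m\in\{C_0,C_1\}$ I would start from a string $z$ with $\tilde f(z)=b$, form the input $z'$ of $f$ by collapsing each block $y_{j1},\ldots,y_{jd}$ to $h(y_{j1},\ldots,y_{jd})$, so $f(z')=b$, take a $b$-certificate $a$ of $f$ with $z'\sim a$ and length at most $C_b(f)$, and turn it into a certificate $\tilde a$ of $\tilde f$ by keeping each fixed Boolean coordinate and replacing each fixed large coordinate $y_j$ by the assignment of the full block $y_{j1},\ldots,y_{jd}$ to the bits it has in $z$. Since $h$ of those bits equals $a(y_j)$, every string consistent with $\tilde a$ collapses to one consistent with $a$ and hence has $\tilde f$-value $b$; the length of $\tilde a$ is at most $d$ times the length of $a$, i.e., at most $C_b(f)\cdot d$. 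For $m\in\{UP_0,UP_1\}$ I would apply this transformation to an entire unambiguous family $a_1,\ldots,a_m$ for $f$, except that a fixed large coordinate $y_j$ of some $a_i$ is now expanded into one certificate for every way of setting the block $y_{j1},\ldots,y_{jd}$ to an element of $h^{-1}(a_i(y_j))$; the number of certificates may blow up but their lengths stay at most $UP_b(f)\cdot d$, which is all that matters. Unambiguity is inherited: a string $w$ with $\tilde f(w)=b$ collapses to a string consistent with a unique $a_i$, which pins down $i$, and then the actual bits of $w$ in the expanded blocks pin down exactly which of the certificates associated to $a_i$ contains $w$.

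For the ``moreover'' inequality $D(f)\le D(\tilde f)$ I would simulate an optimal tree $\tilde\A$ for $\tilde f$ on an input of $f$: queries to $x_i$ are answered directly, and the first time $\tilde\A$ reads some bit $y_{jk}$ I would query the single large variable $y_j$, read off its value $v_j$, fix once and for all an arbitrary preimage $\beta_j\in h^{-1}(v_j)$ (which is nonempty because $h$ is onto $[N]$), and answer this query and every later query inside block $j$ according to $\beta_j$. The output of $\tilde\A$ is then $\tilde f$ evaluated at $(x,\beta)$, which by definition of $\tilde f$ equals $f(x,y)$, and the number of queries I make is at most the number of distinct variables read by $\tilde\A$, hence at most $D(\tilde f)$. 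The one point I expect to need care is exactly this: a single query to $y_j$ must suffice to answer all of $\tilde\A$'s queries in block $j$ consistently, which works precisely because we may commit to any preimage of the observed value — an argument that would break down if $h$ were not surjective.
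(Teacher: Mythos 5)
Your proposal is correct and follows essentially the same route as the paper: simulate queries block-by-block in both directions (committing to a preimage of $h$ for the ``moreover'' part), and turn each certificate of $f$ into certificates of $\tilde f$ by expanding every fixed large-alphabet coordinate into the $d$ bits of its block, which multiplies lengths by $\lceil\log N\rceil$ and preserves unambiguity. Your treatment of the $UP_b$ case (one expanded certificate per choice of preimage) just spells out what the paper's input-by-input mapping does implicitly.
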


\begin{proof}
In appendix \ref{app:tech}.
\end{proof}

A second technical result that we use is about combining partitions into subcubes for several sets $\Sigma_i^{n_i}$
into a partition for $\Sigma_1^{n_1}\times \Sigma_2^{n_2}\times \ldots \times \Sigma_m^{n_m}  $.

\begin{lemma}\label{th:product_partition}
	Suppose that $ \Sigma_1 $, \ldots, $ \Sigma_m $ are finite alphabets and $ n_1, \ldots, n_m \in \mbb N $. Suppose that for each $ i \in [m] $ there is a set $ K_i \subset \Sigma_i^{n_i} $ which can be partitioned into disjoint subcubes defined by assignments of length at most $ d_i  $.
	
	Then the set $ K_1 \times K_2 \times \ldots \times K_m \subset \Sigma_1^{n_1}\times \Sigma_2^{n_2}\times \ldots \times \Sigma_m^{n_m}  $ can be partitioned into disjoint subcubes defined by assignments of length at most $ d_1+\ldots+d_m $.
\end{lemma}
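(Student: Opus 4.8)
The plan is to take the ``product'' of the given partitions and check it has the required properties, reducing first to the case $m=2$ by induction. For $m=1$ there is nothing to prove, and for the inductive step one regroups
$\Sigma_1^{n_1}\times\cdots\times\Sigma_m^{n_m}$ as $(\Sigma_1^{n_1})\times(\Sigma_2^{n_2}\times\cdots\times\Sigma_m^{n_m})$; a subcube of $\Sigma_2^{n_2}\times\cdots\times\Sigma_m^{n_m}$ defined by an assignment of length at most $d_2+\cdots+d_m$ (which exists by the induction hypothesis applied to $K_2\times\cdots\times K_m$) is still such a subcube when that product is viewed as a single block, so the $m=2$ case applied to $K_1$ and $K_2\times\cdots\times K_m$ finishes the induction. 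Thus we may assume $m=2$.

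Fix $m=2$. Write $K_1=\bigsqcup_{\alpha\in A}S_\alpha$ with each $S_\alpha$ the subcube defined by a partial assignment $a_\alpha$ on the coordinate set $[n_1]$ with $|I_{a_\alpha}|\le d_1$, and $K_2=\bigsqcup_{\beta\in B}T_\beta$ with each $T_\beta$ the subcube defined by $b_\beta$ on $[n_2]$ with $|I_{b_\beta}|\le d_2$. The first step is to observe that each Cartesian product $S_\alpha\times T_\beta$ is itself a subcube of $\Sigma_1^{n_1}\times\Sigma_2^{n_2}$: identifying the coordinate set of the product space with the disjoint union $[n_1]\sqcup[n_2]$, let $c_{\alpha\beta}$ be the partial assignment that agrees with $a_\alpha$ on $I_{a_\alpha}$ and with $b_\beta$ on $I_{b_\beta}$; this is well defined since $I_{a_\alpha}$ and $I_{b_\beta}$ lie in disjoint copies, and a string $(x,y)$ satisfies $(x,y)\sim c_{\alpha\beta}$ if and only if $x\sim a_\alpha$ and $y\sim b_\beta$, i.e.\ $(x,y)\in S_\alpha\times T_\beta$. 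Its length is $|I_{a_\alpha}|+|I_{b_\beta}|\le d_1+d_2$.

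The second step is to check that $\{S_\alpha\times T_\beta:(\alpha,\beta)\in A\times B\}$ partitions $K_1\times K_2$. For coverage, any $(x,y)\in K_1\times K_2$ has $x\in S_\alpha$ for some $\alpha$ and $y\in T_\beta$ for some $\beta$, hence $(x,y)\in S_\alpha\times T_\beta$. For disjointness, if $(x,y)\in(S_\alpha\times T_\beta)\cap(S_{\alpha'}\times T_{\beta'})$, then $x\in S_\alpha\cap S_{\alpha'}$ forces $\alpha=\alpha'$ and $y\in T_\beta\cap T_{\beta'}$ forces $\beta=\beta'$. Together with the first step this yields a partition of $K_1\times K_2$ into subcubes defined by assignments of length at most $d_1+d_2$, completing the $m=2$ case and hence the lemma.

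I do not expect a genuine obstacle here: the only point requiring care is the bookkeeping that identifies the variable set of the product space with the disjoint union of the variable sets of the factors, so that the ``concatenated'' assignment $c_{\alpha\beta}$ is legitimately a partial assignment in the sense of the Preliminaries; everything else is immediate from the definitions of subcube, consistency, and Cartesian product.
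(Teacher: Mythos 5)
Your proposal is correct and is essentially the paper's own argument: both form the product subcubes $S_{1,j_1}\times\cdots\times S_{m,j_m}$, define the combined partial assignment by concatenating the factor assignments on disjoint coordinate blocks (so its length is at most $d_1+\cdots+d_m$), and deduce disjointness and coverage coordinatewise. The only difference is organizational --- you reduce to $m=2$ by induction, while the paper treats all $m$ factors at once with explicit index offsets --- which changes nothing of substance.
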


\begin{proof}
In appendix \ref{app:tech}.
\end{proof}

\subsection{Communication complexity}

In the standard model of communication complexity \cite{Yao79}, 
we have a function $f(x, y):X\times Y \rightarrow \{0, 1\}$,
with $x$ given to one party (Alice) and $y$ given to another party (Bob).
{\em Deterministic communication complexity} of $f$, $D^{cc}(f)$ is the smallest $k$
such that there is a protocol for Alice and Bob that computes $f(x, y)$ and, for any $x$ and $y$,
the amount of bits communicated between Alice and Bob is at most $k$.

The counterpart of $D^{sc}(f)$ for communication complexity is the partition number $\chi(f)$,
defined as the smallest $k$ such that there $X\times Y$ can be partitioned into $X_i\times Y_i$ for $i\in[k]$ 
so that each $(x, y)\in X\times Y$ belongs to exactly one of $X_i\times Y_i$ and,
for each $i\in [k]$, $f(x, y)$ is the same for all $(x, y)\in X_i\times Y_i$.

\section{Results}
\label{sec:res}

\begin{theorem} \label{th:main}
There is a sequence of functions $f$ with $D(f)=\Omega((D^{sc}(f))^{2-o(1)})$.
\end{theorem}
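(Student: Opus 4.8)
The plan is to instantiate the cheat-sheet construction on a carefully chosen base function, observe that the 1-inputs of a cheat-sheet function admit an unusually cheap subcube partition, and then iterate a composition-plus-rebalancing scheme to push the exponent from roughly $1.5$ all the way up to $2-o(1)$.

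The plan is to build $f$ by iterating the cheat-sheet construction of \cite{ABK}, interleaved with ``rebalancing'' composition steps, starting from a base function with a quadratic gap between certificate complexity and decision-tree complexity --- for concreteness a read-once depth-two $\mathrm{AND}$-$\mathrm{OR}$ formula $h$ on $n$ variables, for which $D(h)=D^{sc}(h)=n$ while $C(h)=\Theta(\sqrt n)$, so that $D(h)=\Theta(C(h)^2)$ is as large as the certificate bound $D\le O(C^2)$ allows. Note that $h$ itself has no $D$-versus-$D^{sc}$ separation; the whole separation will be manufactured by the construction.

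First I would set up the cheat-sheet step. Given a function $g$ (over a possibly large alphabet) of input length $N$, $g\CS$ consists of $c=\Theta(\log N)$ independent copies of $g$ whose outputs encode a pointer $\ell$ into one of $2^c$ ``cheat-sheet'' cells, together with those cells; $g\CS$ outputs $1$ exactly when cell $\ell$ encodes certificates for the $c$ copies that are consistent with the actual inputs. The key observation is that $g\CS^{-1}(1)$ is a \emph{disjoint} union of subcubes, one for each pair (pointer value $\ell$, contents of cell $\ell$): once the cell is fixed, the certificates it encodes are determined, and they in turn force the value of the pointer, so the subcube need only fix that cell together with the at most $c\cdot C(g)$ input bits named by those certificates. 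Assembling the cell-part and the input-part via Lemma~\ref{th:product_partition} and Booleanizing via Lemma~\ref{th:booleanization}, this gives $UP_1(g\CS)=\tilde O(C(g))$. On the other hand $D(g\CS)=\tilde\Theta(D(g))$ --- any solver must determine the pointer, and an adversary can force it to evaluate a copy of $g$ to full depth --- while the construction produces \emph{no} cheap partition of $g\CS^{-1}(0)$: certifying that cell $\ell$ is wrong still requires pinning down the pointer through unambiguous certificates for the copies of $g$, so $UP_0(g\CS)$ can be as large as $\tilde\Theta(D^{sc}(g))$.

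A single cheat-sheet step therefore leaves $D^{sc}$ essentially unchanged. To turn the one-sided saving into real progress, each cheat-sheet step is preceded by a rebalancing step: a composition with $\mathrm{AND}$/$\mathrm{OR}$ gadgets of carefully tuned fan-in that rearranges the quadruple $(UP_0,UP_1,C_0,C_1)$ so that the following cheat-sheet again exhibits a fresh one-sided saving, while $D$ gets multiplied by the gadget's complexity (via the standard adversary lower bound for $\mathrm{AND}/\mathrm{OR}$-composition) and $D^{sc}$ grows by strictly less. Carrying the whole vector $(UP_0,UP_1,C_0,C_1,D,\text{input length})$ through the recursion, the exponent $e_k:=\log D(f_k)/\log D^{sc}(f_k)$ obeys a recurrence of the form $e_{k+1}=\phi(e_k)$ with $\phi$ increasing, $\phi(e)>e$ for $e<2$, and unique fixed point $2$; hence $e_k\to 2$. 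Finally, letting the number of iterations $k=k(n)$ grow slowly with the input length folds the accumulated polylogarithmic overhead (from Booleanization, from the cell sizes, and from the gadgets) into an $n^{o(1)}$ factor, which gives $D(f)=\Omega\big((D^{sc}(f))^{2-o(1)}\big)$.

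The step I expect to be the main obstacle is designing the rebalancing so that the exploitable asymmetry is genuinely \emph{refreshed} at each round. The most naive choice --- composing with $\mathrm{AND}$ of fan-in $\approx UP_0/UP_1$ --- drives the $1$-certificate complexity up until it matches $D^{sc}$, so the next cheat-sheet yields nothing new and the exponent saturates at $3/2$, merely reproving \cite{GPW15}; getting past $3/2$ requires a rebalancing that keeps $C_0$ and $C_1$ well below $UP_0$ while still boosting $D$, and this is where the construction has to be clever. A secondary, purely quantitative obstacle is the bookkeeping: every round blows up the input length polynomially and loses polylogarithmic factors, and one must verify that over $k(n)$ rounds these compound to only $n^{o(1)}$ --- precisely what degrades the clean exponent $2$ to $2-o(1)$.
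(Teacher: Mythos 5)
Your high-level skeleton --- cheat sheets to make the $1$-inputs cheap to partition into subcubes, AND/OR compositions to rebalance, and iteration with the polylogarithmic overhead folded into an $n^{o(1)}$ factor --- is indeed the plan of the paper, and your accounting of $UP_1$, $UP_0$ and $D$ for the cheat-sheet function matches items (1)--(3) of Lemma \ref{th:sh_properties}. But the proposal stops exactly where the theorem is actually proved: you state that the rebalancing must ``keep $C_0$ and $C_1$ well below $UP_0$ while still boosting $D$'' and that ``this is where the construction has to be clever,'' without supplying such a construction, and the recurrence $e_{k+1}=\phi(e_k)$ with fixed point $2$ is asserted, not derived from any concrete round. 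As written, the argument cannot get past the exponent-$3/2$ failure mode that you yourself describe.

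The missing ingredient is a bound you never record: the \emph{ordinary} (ambiguous) certificate complexity of the cheat-sheet function is small on both sides, $C(f^{t,c}_{\mathrm{CS}})\le 3tc$ with $c\approx C(f)$ (Lemma \ref{th:sh_properties}, item (4)) --- a $0$-input can be certified, without any unambiguity requirement, by exhibiting certificates for all $t$ copies, the pointed cell, and the input bits it addresses, even though $UP_0(f^{t,c}_{\mathrm{CS}})$ really is of order $t\,D^{sc}(f)$. This is what makes one round work in the paper: set $h=OR_n\circ f^n$ (which raises $C_0$ to $nC_0(f)$ but leaves $C_1$ unchanged), apply the cheat sheet to $h$ with parameter $c=\max\{nC_0(f),C_1(f)\}$, Booleanize, and compose with $AND_n$ on the outside. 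The outer AND multiplies $D$ by $n$, gives $UP_0(f')\le (n-1)UP_1+UP_0$ rather than $n\cdot UP_0$, and, crucially, leaves $C_0(f')\le C(\widetilde h_{\mathrm{CS}})=\tilde O(c)$, a factor $n$ below $D^{sc}(f')$ --- this is precisely the ``refreshed asymmetry'' entering the next round. The resulting invariant is concrete (Lemma \ref{th:iteracija}): after $m$ rounds $D\approx n^{2m-1}$ while $D^{sc},C_1\approx n^m$ and $C_0\approx n^{m-1}$, so the exponent is $2-\tfrac{1}{m}$, and $2-o(1)$ follows by letting $m$ grow; no fixed-point analysis of an abstract $\phi$ is needed, nor is it available without the explicit round. (Also, the base can simply be $AND_n$; your depth-two seed with a quadratic $C$-versus-$D$ gap is unnecessary, since the $C_0$-versus-$D^{sc}$ imbalance is manufactured by the rounds themselves.) Without tracking $C_0$ and $C_1$ of the cheat-sheet function separately from $UP_0$ and $UP_1$, the recurrence you posit cannot be established, and the proof as proposed has a genuine gap at its central step.
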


\begin{proof}[Proof sketch.]
We describe the main steps of the proof of Theorem \ref{th:main} (with each step described in more detail in the next subsection or in the appendix).
We use the {\em cheat-sheet} construction of Aaronson, Ben-David and Kothari \cite{ABK}
which transforms a given function $f$ into a new function $f^{t,c}\CS$ in a following way.

\begin{definition}\label{def:sh}
Suppose that $ \Sigma $ is a finite alphabet and $ f : \Sigma^n \to \BC $ satisfies $ C(f) \leq c \leq n$. Let $ t \in \mbb N $ be fixed.  A function $ f^{t,c}\CS : \Sigma^{tn} \times  [n]^{2^t tc}  \to \BC $ for an input $ (x,y) $,  $x\in  \Sigma^{t n}$,  $y\in  [n]^{2^t tc}   $  is defined as follows:
\begin{itemize}
\item the input $ x$ is interpreted as a $  t \times n$ matrix with entries $ x_{p,q} \in \Sigma $, $ p \in [t] $, $ q\in [n] $;
\item for $ p\in [t] $, the  $ p $th row of $ x $ is denoted by $ x^{(p)} = (x_{p,1},x_{p,2},\ldots,x_{p,n}) \in \Sigma^n$ and is interpreted as an input to the function $f$, with the value of function denoted $ b_p = f(x^{(p)})  \in \BC$, ;
\item the input $ y $ is interpreted as a three-dimensional array of size $ 2^t \times t \times c $   with entries $ y_{p,q,r}  \in  [n]$, $ p \in [2^t] $, $ q\in [t] $, $ r \in [c] $;
\item we denote $ \hat p =1+ \sum_{i=1}^{t} b_i \, 2^{i-1} \in [2^t] $;
\item the function  $ f^{t,c}\CS  $ is defined to be 1 for the input $ (x,y) $ iff for each $ q \in [t] $ the  following properties simultaneously hold: 
\begin{enumerate}
	\item the $ c $ numbers $ y_{\hat p, q, 1} $, \ldots, $y_{\hat p, q, c} $ are pairwise distinct, and 
	\item the assignment $ a^q : \lrb{y_{\hat p, q, 1} ,\ldots, y_{\hat p, q, c} } \to \Sigma$, defined by 
	$$  a^q(y_{\hat p,q,r}) = x_{q, y_{\hat p,q,r}}, \quad  \text{for all } r \in [c],  $$
	is a $ b_q $-certificate for $ f $.
\end{enumerate}
\end{itemize}
\end{definition}

The function $f^{t,c}\CS$ can be computed by computing $f(x^{(p)})$ for all $ p\in [t] $, determining $\hat p$ and 
verifying whether the two properties hold. Alternatively, if someone guessed $\hat p$, he could verify that $ b_p = f(x^{(p)})$ for all $p\in[t]$
by just checking that the certificates $a^q$ (which could be substantially faster than computing $f(x^{(p)})$.
This is the origin of the term ``cheat sheet" \cite{ABK} - we can view $ y_{\hat p,q,r}$ for $ q\in [t] $, $ r \in [c] $ 
as a cheat-sheet that allows to check $ b_p = f(x^{(p)})$ without running a full computation of $f$.

This construction has the following effect on the complexity measures $D(f)$, $C(f)$, $UP_0(f)$ and $UP_1(f)$:

\begin{lemma}\label{th:sh_properties}
Assume that $ f  $ and $ c $  satisfy the conditions of Definition \ref{def:sh}  and $ t \geq 2\log \lr{t D(f)} $. Then, we have:
\begin{enumerate}
\item $  \frac{t}{2} D(f)  \leq   D(f^{t,c}\CS)  \leq   t D(f) + tc$;
\item $ UP_1(f^{t,c}\CS) \leq 2tc$;
\item $ UP_0(f^{t,c}\CS) \leq t  D^{sc}(f)  + 2tc$;
\item $ C(f^{t,c}\CS) \leq 3tc  $.
\end{enumerate}
\end{lemma}

\begin{proof}
In section \ref{sec:lemmas}.
\end{proof}

We note that some of variables for the resulting function $f^{t,c}\CS$ take values in $[n]$, even if the original $f$ is Boolean.
To obtain a Boolean function as a result, we ``Booleanize" $f^{t,c}\CS$ as described in Lemma \ref{th:booleanization}.

For our purposes, the advantage of Lemma \ref{th:sh_properties} is that $UP_1(f^{t,c}\CS)$ for the resulting function $f^{t,c}\CS$ may be substantially smaller than $UP_1(f)$ (because $UP_1(f^{t,c}\CS)$ depends on $C(f)$ which can be smaller than $UP_1(f)$)!

However, going from $f$ to $f\CS$ does not decrease $UP_0$ (because $UP_0(f^{t,c}\CS)$ depends on $D^{sc}(f) $). 
To deal with that, we combine the cheat-sheet construction with two other steps which rebalance $UP_0$ and $UP_1$.
These two steps are composition with AND and OR which 
have the following effect on the complexity measures $D(f)$, $C(f)$, $UP_0(f)$ and $UP_1(f)$:

\begin{lemma}\label{th:AND_OR_composition}
	Suppose that $ g : \BC^N \to \BC $. Let $ f_{and} = AND_n \circ g^n  $ and $ f_{or } = OR_n \circ g^n  $. Then  
	\begin{align*}
	& D(f_{and}) = D(f_{or}) = n D(g); \\
	& C_0 (f_{or}) =n  C_0(g), \quad  C_1(f_{or}) =  C_1 (g); \\
	& C_0 (f_{and}) = C_0(g), \quad  C_1(f_{and}) = n C_1 (g); \\
	& UP_0 (f_{or})  \leq n UP_0 (g), \quad  UP_1(f_{or}) \leq   (n-1)  UP_0(g) +  UP_1 (g)   ; \\
	& UP_0 (f_{and}) \leq  (n-1)  UP_1(g) +  UP_0 (g), \quad  UP_1(f_{and}) \leq nUP_1 (g).
	\end{align*}
\end{lemma}

\begin{proof}
In appendix \ref{app:main}.
\end{proof}

By combining Lemmas \ref{th:booleanization}, \ref{th:sh_properties} and \ref{th:AND_OR_composition}, we get

\begin{lemma}\label{th:block}
Let 
$ f : \BC^N \to \BC  $
be fixed.

Suppose that
$ t,c,n \in \mbb N $ satisfy
 $ \max \lrb{nC_0(f), C_1(f)}  \leq c $ and $ t \geq  2\log \lr{ t D(f)} $.
Consider the function 
$$ f' :   \BC^{tNn^2 + 2^t tc n \lceil \log (Nn) \rceil }   \to \BC  $$
  defined as follows:
\[ 
f' =  AND_n  \circ   \widetilde {h}\CS^n,
\]
where 
$\widetilde {h}\CS $ is the boolean function associated to $ h^{t,c}\CS$ 
and
$ h:    \BC^{Nn}   \to \BC $ is defined as
\[ 
h:= OR_n \circ f^n  .
\]
Then the following estimates hold:
\begin{enumerate}
\item $ 0.5 tn^2 D(f) \leq  D(f') \leq   tn( c+n D(f)   )\lceil \log (Nn) \rceil $;
\item $ D^{sc}(f') \leq   tn\lr{2 c + D^{sc}(f) } \lceil \log (Nn) \rceil   $;
\item $ C_0(f')   \leq 3tc  \lceil \log (Nn) \rceil $;
\item $ C_1(f') \leq  3tcn   \lceil \log (Nn) \rceil $.
\end{enumerate}
\end{lemma}

\begin{proof}
In appendix \ref{app:main}.
\end{proof}

Applying Lemma \ref{th:block} results in a function $f'$ for which $D(f')$ is roughly $n^2 D(f)$ but $D^{sc}(f)$ is
roughly $n D^{sc}(f)$. We use Lemma \ref{th:block} repeatedly to construct a sequence of functions $f^{(1)}, f^{(2)}, \ldots$
in which $f^{(1)}=AND_n$ and each next function $f^{(m+1)}$ is equal to the function $f'$ obtained
by applying Lemma \ref{th:block} to $f^{(m)}=f$. The complexity of those functions is described by the
next Lemma.

\begin{lemma}\label{th:iteracija}
Let $ m \in \mbb N $. Then there are positive integers  $ a_0 $, $ a_1 $, $ a_2 $, $ a_3 $ s.t.  for all integers $ n \geq 2$ there exists  
$ N \in \mbb N $ and a function $ f^{(m)} : \BC^N \to \BC $ satisfying
\begin{align*} 
& N \leq a_0 \,  n^{9m}  \log^{10m-10} (n), \\
& a_1 \,  n^{2m-1} \log^{m-1}(n)  \leq  D(f^{(m)}) \leq a_2 \,  n^{2m-1} \log^{2m-2}(n), \\
&    D^{sc}(f^{(m)}) \leq a_3  \, n^{m} \log^{2m-2}(n), \\
&    C_0 (f^{(m)}) \leq a_3 \, n^{m-1} \log^{2m-2}(n), \\
&    C_1 (f^{(m)}) \leq a_3 \,  n^{m} \log^{2m-2}(n).
\end{align*}
\end{lemma}

\begin{proof}
In appendix \ref{app:main}.
\end{proof}

Theorem \ref{th:main} now follows immediately from Lemma \ref{th:iteracija} which implies that for every $ m\in \mbb N $ we have a family of functions satisfying
\[ 
D^{sc} (f) = \tilde O(n^m) , \quad  D(f ) = \tilde \Omega (n^{2m-1}).
\]
Then, $D(f)=\tilde\Omega((D^{sc}(f))^{2-\frac{1}{m}})$. Since this construction works for every $ m\in \mbb N $, we get
that $D(f)=\Omega((D^{sc}(f))^{2-o(1)})$ for an appropriately chosen sequence of functions $f$.
\end{proof}

{\bf Communication complexity implications.}
The standard strategy for transferring results from the domain of query complexity to communication complexity 
is to compose a function $f:\{0, 1\}^n \rightarrow \{0, 1\}$ with a function $g:X\times Y \rightarrow \{0, 1\}$, obtaining the function
\[ f\circ g^n (x_1, \ldots, x_n, y_1, \ldots, y_n) = f(g(x_1, y_1), \ldots, g(x_n, y_n)) .\]
We can then define $x=(x_1, \ldots, x_n)$ as Alice's input and $y=(y_1, \ldots, y_n)$ as Bob's input.
Querying the $i^{\rm th}$ variable then corresponds to computing $g(x_i, y_i)$ and, if we have a query algorithm
which makes $D(f)$ queries, we can obtain a communication protocol for $f\circ g^n$ with a communication 
that is approximately $D(f) D^{cc}(g)$. 

Building on an earlier work by Raz and McKenzie \cite{RM}, G\"o\"os, Pitassi and Watson \cite{GPW15} have shown
\begin{theorem}
\label{th:gpw}
\cite{GPW15}
For any $n$, there is a function $g:X\times Y$ such that, for any $f:\{0, 1\}^n \rightarrow \{0, 1\}$, we have
\begin{enumerate}
\item
$D^{cc}(f\circ g^n) = \Theta(D(f) \log n)$ and
\item
$\log \chi(f\circ g^n) = O(D^{sc}(f) \log n)$.
\end{enumerate} 
\end{theorem}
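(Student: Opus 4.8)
The plan is to use a single ``gadget'' $g = \mathrm{IND}_m : [m]\times\{0,1\}^m \to \{0,1\}$, the indexing function $\mathrm{IND}_m(j,z)=z_j$, with $m$ a sufficiently large fixed polynomial in $n$ (e.g.\ $m=n^{20}$), and to prove both items for $F := f\circ g^n$ where Alice holds $x=(x_1,\dots,x_n)\in[m]^n$ and Bob holds $y=(y_1,\dots,y_n)\in(\{0,1\}^m)^n$. Item~2 and the easy direction of item~1 are short; the substantive content is the lower bound $D^{cc}(F)=\Omega(D(f)\log n)$, which is the Raz--McKenzie/simulation (``lifting'') theorem.

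\emph{The two upper bounds.} For $D^{cc}(F)=O(D(f)\log n)$: Alice and Bob run an optimal depth-$D(f)$ decision tree for $f$; to answer a query to variable $i$, Alice sends $j_i\in[m]$ (using $\lceil\log m\rceil$ bits) and Bob replies with $(y_i)_{j_i}$ (one bit), which reveals $g(x_i,y_i)$; after at most $D(f)$ rounds the value of $F$ is determined, for $O(D(f)\log m)=O(D(f)\log n)$ bits total. For $\log\chi(F)=O(D^{sc}(f)\log n)$: fix an unambiguous subcube partition of $\{0,1\}^n$ realizing $D^{sc}(f)$; each part is a subcube $S_\rho$ given by a partial assignment $\rho$ with $|I_\rho|\le D^{sc}(f)$, and since the parts are disjoint with relative size at least $2^{-D^{sc}(f)}$ there are at most $2^{D^{sc}(f)}$ of them. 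For each part, $(g^n)^{-1}(S_\rho)=\prod_{i\in I_\rho}g^{-1}(\rho(i))\times\prod_{i\notin I_\rho}([m]\times\{0,1\}^m)$; for the indexing gadget $g^{-1}(b)=\bigsqcup_{j\in[m]}\{j\}\times\{z: z_j=b\}$ is a disjoint union of $m$ combinatorial rectangles, so by the communication analogue of Lemma~\ref{th:product_partition} each $(g^n)^{-1}(S_\rho)$ is a disjoint union of at most $m^{|I_\rho|}\le m^{D^{sc}(f)}$ rectangles, on each of which $F$ is constant. Summing over the at most $2^{D^{sc}(f)}$ parts gives $\chi(F)\le 2^{D^{sc}(f)}m^{D^{sc}(f)}$, i.e.\ $\log\chi(F)=O(D^{sc}(f)\log n)$.

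\emph{The lifting lower bound (main step).} Given a deterministic protocol $\Pi$ for $F$ of cost $c$, construct a decision tree for $f$ of depth $O(c/\log m)$. The tree walks down $\Pi$ while maintaining: the current protocol node, i.e.\ a rectangle $\mathcal A\times\mathcal B\subseteq[m]^n\times(\{0,1\}^m)^n$; a partial assignment $\rho$ listing the variables of $f$ queried so far; and the invariant that on each fixed coordinate $i\in I_\rho$ the gadget is pinned to $\rho(i)$ on $\mathcal A\times\mathcal B$, whereas on each free coordinate $i\notin I_\rho$ the projection of $\mathcal A$ to block $i$ is ``thick'' (large min-entropy) and $\mathcal B$ is ``dense'' on that block, so $g$ can still take either value there. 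Each transmitted bit erodes structure on one side; whenever a free block becomes too unbalanced, the tree queries that variable of $f$, learns its value, and uses it to re-fix the gadget there while restoring density on the remaining free blocks, moving that coordinate into $I_\rho$. Because $m$ is polynomially large, a single communicated bit costs only an $O(1/\log m)$ fraction of a ``fixing'', so at most $O(c/\log m)=O(c/\log n)$ variables of $f$ are ever queried. At a leaf of $\Pi$ the rectangle is $F$-monochromatic and, by the invariant, every free coordinate is still ``live''; hence for every extension $\beta$ of $\rho$ to the free coordinates some input in the rectangle realizes $g$-values $(\rho,\beta)$, forcing $f$ to be constant on the subcube $S_\rho$, so the tree can correctly output the leaf's value. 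This yields $D(f)=O(c/\log n)$, i.e.\ $c=\Omega(D(f)\log n)$, and combined with the upper bound gives $D^{cc}(F)=\Theta(D(f)\log n)$.

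\emph{Main obstacle.} Everything hinges on the two structural lemmas of the lifting step: a \emph{thickness} lemma ensuring Alice's pointer blocks stay spread out under her messages, and a \emph{density restoration} lemma (the Raz--McKenzie blockwise ``full support''/prefix-free density machinery) ensuring that after Bob's messages near-uniformity on the free blocks can be re-established by fixing only a controlled number of coordinates. The delicate point is the entropy bookkeeping: one must show the total number of fixed coordinates is $O(c/\log m)$ rather than $O(c)$, and it is precisely here that the polynomial size of the gadget is essential. This is the content of \cite{GPW15} (building on \cite{RM}), which we would invoke rather than reprove.
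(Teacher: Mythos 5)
Your proposal is correct and follows essentially the same route as the paper: the paper states this theorem as an import from \cite{GPW15} (building on \cite{RM}), noting only that the two upper bounds follow easily by simulating queries with gadget evaluations, while the hard direction $D^{cc}(f\circ g^n)=\Omega(D(f)\log n)$ is exactly the simulation/lifting theorem that is invoked rather than reproved. Your added details (the indexing gadget with polynomial $m$, the explicit count $\chi(F)\le 2^{D^{sc}(f)}m^{D^{sc}(f)}$, and the sketch of the thickness/density bookkeeping) are consistent with \cite{GPW15} and fill in what the paper leaves implicit.
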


In this theorem, $D^{cc}(f\circ g^n) = O(D(f) \log n)$ and $\log \chi(f\circ g^n) = O(D^{sc}(f) \log n)$ follow easily 
by replacing queries to variables $x_i$ with computations of $g$. The part of the theorem that is not obvious (and is quite difficult technically) is that one also has $D^{cc}(f\circ g^n) = \Omega(D(f) \log n)$, i.e. communication protocols for $f\circ g^n$ cannot be better than the ones obtained from deterministic query algorithms for $f$.

By combining Theorems \ref{th:main} and \ref{th:gpw}, we immediately obtain

\begin{corollary}
There exists $h$ with $D^{cc}(h)=\Omega(\log^{2-o(1)} \chi(h))$.
\end{corollary}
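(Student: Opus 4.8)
The plan is to feed the query-complexity separation of Theorem~\ref{th:main} through the lifting theorem (Theorem~\ref{th:gpw}). Fix $m \in \mbb N$ and let $f = f^{(m)} : \BC^N \to \BC$ be a function from the family produced by Lemma~\ref{th:iteracija}, so that $D^{sc}(f) = \tilde O(n^m)$, $D(f) = \tilde\Omega(n^{2m-1})$, and $N = \tilde O(n^{9m})$; in particular $\log N = \Theta(\log n)$ for fixed $m$. Let $g : X \times Y \to \BC$ be the gadget supplied by Theorem~\ref{th:gpw} for input length $N$, and put $h := f \circ g^N$.

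Applying the two items of Theorem~\ref{th:gpw} gives $D^{cc}(h) = \Theta(D(f)\log N)$ and $\log\chi(h) = O(D^{sc}(f)\log N)$. Since $\log N = \Theta(\log n)$, these yield
\[ D^{cc}(h) = \tilde\Omega(n^{2m-1}), \qquad \log\chi(h) = \tilde O(n^m), \]
and hence $D^{cc}(h) = \tilde\Omega\bigl((\log\chi(h))^{2-1/m}\bigr)$, so that $D^{cc}(h) = \Omega\bigl((\log\chi(h))^{2-1/m-o(1)}\bigr)$. As in the passage from Lemma~\ref{th:iteracija} to Theorem~\ref{th:main}, letting $m$ grow slowly with the input size and picking the functions accordingly produces a single sequence $h$ with $D^{cc}(h) = \Omega(\log^{2-o(1)}\chi(h))$.

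The substance of the argument is entirely contained in Theorem~\ref{th:gpw}; everything here is bookkeeping. The only point needing care is the accounting of the $\log N$ factors: because $N$ is polynomial in $n$, the $\Theta(\log N)$ blow-up on the communication side and the $O(\log N)$ blow-up on the partition side are both $\Theta(\log n)$ and therefore cancel in the ratio, so they do not degrade the exponent $2 - 1/m$ beyond the polylogarithmic slack already absorbed by $\tilde\Omega$ and $\tilde O$. The subsequent diagonalization over $m$ is legitimate since Lemma~\ref{th:iteracija} supplies the required families uniformly for every $m$, so the limiting statement with exponent $2 - o(1)$ follows exactly as in the proof of Theorem~\ref{th:main}.
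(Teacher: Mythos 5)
Your proposal is correct and follows exactly the paper's route: compose the query-separating family from Theorem~\ref{th:main} (via Lemma~\ref{th:iteracija}) with the gadget of Theorem~\ref{th:gpw}, observe that the $\Theta(\log N)$ factors on both sides are harmless, and diagonalize over $m$. The paper states this as an immediate combination of the two theorems; your write-up merely makes the bookkeeping of the $\log N$ factors explicit.
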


\subsection{Proof of Lemma \ref{th:sh_properties}}
\label{sec:lemmas}

\begin{proof}[Proof of Lemma \ref{th:sh_properties}]
	For brevity, we now denote $f^{t,c}\CS$ as simply $f\CS$.
	As before, the first $ tn $ variables are indexed by  pairs $ (p,q)  $, where $ p \in [t] $, $ q\in [n] $; the remaining $ 2^t tc $   variables are indexed by  triples $ (p,q,r)  $, where $ p \in [2^t] $, $ q\in [t] $, $ r \in [c] $.  
	We denote $ x^{(p)} = (x_{p,1},x_{p,2},\ldots,x_{p,n}) \in \Sigma^n$ for each $ p \in [t] $.

	\textbf{Lower bound on deterministic complexity.} 
	Let $\A$ be an adversarial strategy of setting the variables $x_1, \ldots, x_n$
	in the function $f(x_1, \ldots, x_n)$ that forces an algorithm to make at least 
	$D(f)$ queries. We use $t$ copies of this strategy $\A_1, \ldots, \A_t$, with $
	\A_p$ setting the values of $x_{p, 1}, \ldots, x_{p, n}$ in $f(x_{p, 1}, \ldots, x_{p, n})$. 
	The overall adversarial strategy is 
	\begin{enumerate}
		\item
		If a variable $x_{p, q}$ is queried and its value has not been set yet, use $\A_p$ 
		to choose its value;
		\item
		If a variable $y_{p, q, r}$ is queried and its value has not been set yet:
		\begin{enumerate}
			\item
			Let $b$ be the $q^{\rm th}$ bit of $p$. 
			\item
			Choose a certificate $a$ that certifies $f(x_{1}, \ldots, x_{n})=a$ and contains all 
			variables $x_i$ with indices $i=y_{p, q, r'}$ for which we have already chosen values.
		`	If possible, choose $a$ so that, in addition to those requirements, $a$ is also
			consistent with the values among $x_{q, 1}, \ldots, x_{q, n}$ that we have have already fixed.
			\item
			Set $y_{p, q, r}$ be an index of a variable that is fixed by $a$ but is not among
			$y_{p, q, r'}$ that have already been chosen.
			\item
			If $x_{q, y_{p, q, r}}$ is not already fixed, fix it using the adversarial 
			strategy $\A_q$.
		\end{enumerate}
	\end{enumerate}
	
	If less than $\frac{t D(f)}{2}$ queries are made, less than $\frac{t}{2}$ of 
	$f(x^{(p)})$ for $p\in[t]$ have been fixed. Thus, more than
	$\frac{t}{2}$ of $f(x^{(p)})$ are not fixed yet. 
	This means that there are more than $2^{\frac{t}{2}} > t D(f)$ possible choices for $\hat{p}$ that are consistent with the answers to queries that have been made so far.
	Since less than $\frac{t D(f)}{2}$ queries have been made, one of these choices
	has the property that no $y_{\hat{p}, q, r}$ has been queried for any $q$ and $r$.
	
	We now set the remaining variables $x_{p, q}$ so that $f(x^{(p)})$ equals the $p^{\rm th}$ bit of $\hat{p}$. Since no $y_{\hat{p}, q, r}$ has been queried, we are free to set them so that the correct requirements are satisfied for every $q$ ($y_{\hat{p}, q, r}$ are all distinct and $a^q(y_{\hat{p}, q, r})=x_{q, y_{\hat{p}, q, r}}$) or so that they are not satisfied.
	Depending on that, we can have $f\CS=0$ or $f\CS=1$.
	
	\textbf{Upper bound on deterministic complexity.} 
	We first use $tD(f)$ queries to compute $f(x_{p, 1}, \ldots, x_{p, n})$ for all $p\in[t]$. Once we know $f(x^{(p)})$ for all $p\in [t]$, we can
     calculate $\hat{p}$ from Definition \ref{def:sh}. We then use $tc$ queries to query $y_{\hat{p}, q, r}$ for all $q\in[t]$ and $r\in [c]$
     and $tc$ queries to query $x_{q, y_{\hat{p}, q, r}}$ for all $q\in[t]$ and $r\in [c]$. Then, we can check all the conditions of Definition \ref{def:sh}.
	
	\textbf{Unambiguous 1-certificate complexity.} 
	
	We show $UP_1(f\CS )\leq  2tc$ by giving a mapping from 1-inputs $(x, y)$ to 1-certificates $a$,
	with the property that, for any two 1-inputs $(x, y)$ and $(x', y')$, the corresponding
	1-certificates $a$ and $a'$ are either the same or contradict one another in some variable.
	Then, for every 1-input $(x, y)$ there is exactly one certificate $a$ with $(x, y)\sim a$.
	
	To map a 1-input $(x, y)$ to a 1-certificate $a$, we do the following:
	\begin{enumerate}
		\item
		Let $ b_i =f( x^{(i)}) $, for each $ i\in [t] $, and $\hat{p}=1+\sum_{i=1}^t b_i 2^{i-1}$. 
		Since $ f\CS(x,y)=1 $, the $ c $ numbers $ y_{\hat p, i, 1} $, \ldots, $y_{\hat p, i, c} $ are  distinct and the assignment $ a_i : \lrb{y_{\hat p, i, 1} ,\ldots, y_{\hat p, i, c} } \to \Sigma$, defined by 
		$$  a_i(y_{\hat p,i,r}) = x_{i, y_{\hat p,i,r}}, \quad  \text{for all } r \in [c],  $$
		is a valid $ b_i $-certificate for $ f $.
		We define that $a$ must fix all variables fixed by $a_1, \ldots, a_t$ in the same way (i.e., $ a $ fixes $ x_{i,j} $ with indices $ j =y_{\hat p, i,r}  $, for all $ i\in [t] $ and $ r\in [c] $).
		\item
		 We define that $a$ also fixes the variables $y_{\hat{p}, q, r}$, $ q \in [t] $, $ r\in [c] $, to the values
		that they have in the input $(x, y)$.
	\end{enumerate}
	In each stage $ tc $ variables are fixed. Thus, the length of 
	the resulting partial assignment $a$ is  $2tc$.
	Notice that $ a $ is a  1-certificate for $ f\CS $, since the $ t $ assignments  $ a^1 $, \ldots, $ a^t $ uniquely determine $ \hat p $, and all variables $ y_{\hat p,q,r} $ are fixed to valid values, ensuring that $ f\CS= 1 $.

	We now show that, for any two different 1-certificates $a$, $a'$ constructed through this process, there is no input $(x, y)$ that satisfies both of them. 
	If $a$ and $a'$ differ in the part that is fixed in the first stage, there are two possible cases:
	\begin{enumerate}
		\item There exists $j\in [t]$ such that $ a_j $ is 0-certificate for $ f $,  whereas $ a'_j $ is a 1-certificate for $ f $ (or vice-versa). Then there must be no $x^{(j)}$ that satisfies both of them.
		\item For every $q\in [t]$, $ a_q $ and $ a'_q $ are both $ b_q $-certificates, for the same $ b_q \in \BC $ but there exists $j\in [t]$ such that $a_j$ differs from $a'_j$. In this case $ a_1 $, \ldots, $ a_t $ and $ a'_1 $, \ldots, $ a'_t $ determine the same value $ \hat p $.
		
Since $ a_j $ and $ a'_j $ are different certificates that fix the same variables (namely, they both fix 
$ x_{j,l} $ with indices $ l =y_{\hat p, j,r}  $, $r\in [c]$), they must fix at least one variable to different values. 
Then, no $ x^{(j)} $ can satisfy both $ a_j $ and $ a'_j $.
	\end{enumerate}

	If $a$ and $a'$ are the same in the part fixed in the 1st stage, the values of the 
	variables that we fix in the 1st stage
	uniquely determine $\hat{p}$ and, hence, the indices of variables that are fixed in the 2nd stage.
	Hence, the only way how $a$ and $a'$ could differ in the part fixed in the 2nd stage is if the same variable $y_{\hat{p}, q, r}$ is fixed to different values in $a$ and $a'$. This means that there is no $(x, y)$ that satisfies
	both $a$ and $a'$.

	\textbf{Unambiguous 0-certificate complexity.} 
	
	Similarly to the previous case, we give a mapping from 0-inputs $(x, y)$ to 0-certificates $a$,
	with the property that, for any two 0-inputs $(x, y)$ and $(x', y')$, the corresponding
	0-certificates $a$ and $a'$ are either the same or contradict one another in some variable.
	To map a 0-input $(x, y)$ to a 0-certificate $a$, we do the following:
	\begin{enumerate}
		\item
		We fix a partition of $\{0, 1\}^n$ into subcubes corresponding to assignments of length at most $D^{sc}(f)$
        with the property that $f$ is constant on every subcube.
		For each $p\in[t]$, $(x_{p, 1}, \ldots, x_{p, n})$ belongs to some subcube in 
		this partition.
		Let $a_p$ be the certificate that corresponds to this subcube. 
		We define that $a$ must fix all variables fixed by $a_1, \ldots, a_t$ in the same way.
		\item
		Let $\hat{p}=1+\sum_{i=1}^t b_i 2^{i-1}$ where $b_i = f(x^{(i)})$. (We note that $a_1, \ldots, a_t$
		determine the values of $f(x^{(1)}), \ldots, f(x^{(t)})$. Thus, $\hat{p}$ is determined by the variables that
		we fixed at the previous stage.) We define that $a$ also fixes the variables $y_{\hat{p}, q, r}$ to the values
		that they have in the input $(x, y)$.
		\item
		If there is $  q \in [t] $ such that the set $ \lrb{y_{\hat p, q,1}, \ldots, y_{\hat p, q,c}} $ is not equal to the set of variables fixed in some $b_q$-certificate (this includes the case when these $ c $ numbers are not distinct),
		the values of variables fixed by $a$ imply that $f\CS(x, y)=0$. In this case, we stop.
		\item
		Otherwise, there must be $ q' \in [t]  $ such that the set $ \lrb{y_{\hat p, q',1}, \ldots, y_{\hat p, q',c}} $
		is equal to the set of variables fixed in some $b_{q'}$-certificate but the actual assignment $x_{y_{\hat p, q',1}}, \ldots, x_{y_{\hat p, q',c}}$ is not a valid $b_{q'}$-certificate.
		
		In this case, we define that $a$ also fixes $x_{y_{\hat p, q, r}}$ for all 
		$q\in [t], r\in [c]$ to the values that they have in the input $(x, y)$.  
		Since this involves fixing $x_{y_{\hat p, q',1}}, \ldots, x_{y_{\hat p, q',c}}$
		which do not constitute a valid $b_{q'}$-certificate, this implies $f\CS(x, y)=0$.
	\end{enumerate}
	The first stage fixes at most $t D^{sc}(f)$ variables, the second stage fixes $tc$ variables and the last stage 
	also fixes $tc$ variables (some of which might have already been fixed in the 1st stage). Thus, the length of 
	the resulting 0-certificate $a$ is at most $t D^{sc}(f)+2tc$.
	
	We now show that, for any two different 0-certificates $a$, $a'$ constructed through this process, there is no input $(x, y)$ that satisfies both of them. If $a$ and $a'$ differ in the part that is fixed in the first stage, 
	there exists $j\in [t]$ such that
	$a_j$ differs from $a'_j$. Since $a_j, a'_j$ correspond to different subcubes in a partition,
	there must be no $x^{(j)}$ that satisfies both of them.
	
	If $a$ and $a'$ are the same in the part fixed in the 1st stage, the values of the 
	variables that we fix in the 1st stage
	uniquely determine $\hat{p}$ and, hence, the indices of variables that are fixed in the 2nd stage.
	Hence, the only way how $a$ and $a'$ could differ in the part fixed in the 2nd stage is if the same variable $y_{\hat{p}, q, r}$ is fixed to different values in $a$ and $a'$. This means that there is no $(x, y)$ that satisfies
	both $a$ and $a'$.
	
	If $a$ and $a'$ are the same in the part fixed in the first two stages, the values of the variables  
	that we fix in these stages uniquely determine the indices of variables that are fixed in the last stage and the same argument applies.

	\textbf{Certificate complexity.} 
	Since $ b $-certificate complexity is no larger than unambiguous $ b $-certificate complexity, we immediately conclude that
	\[ 
	C_1 (f\CS) \leq UP_1 (f\CS) \leq 2tc.
	\]

	We show $C_0(f\CS )\leq 3tc$ by giving a mapping from 0-inputs $(x, y)$ to 0-certificates $a$ (now different certificates are not required to contradict one another).
	Then, the collection of all 0-certificates $a$ to which some $(x, y)$ is mapped covers
	$f\CS^{-1}(0)$ (possibly with overlaps).
	
	To map a 0-input $(x, y)$ to a 0-certificate $a$, we do the following:
	\begin{enumerate}
		\item
		Let $a_p$, for each $p\in[t]$, be a $ f(x^{(p)}) $-certificate  that is satisfied by $x^{(p)} = (x_{p, 1}, \ldots, x_{p, n})$. 
		We define that $a$ must fix all variables fixed by $a_1, \ldots, a_t$ in the same way.
		\item
		Let $\hat{p}=1+\sum_{i=1}^t b_i 2^{i-1}$ where $b_i = f(x^{(i)})$. (Notice that  $\hat{p}$ is determined by the variables that
		we fixed at the previous stage.) We define that $a$ also fixes the variables $y_{\hat{p}, q, r}$ to the values
		that they have in the input $(x, y)$.
		\item
		If there is $  q \in [t] $ such that the set $ \lrb{y_{\hat p, q,1}, \ldots, y_{\hat p, q,c}} $ is not equal to the set of variables fixed in some $b_q$-certificate, 
		the values of variables fixed by $a$ imply that $f\CS(x, y)=0$. In this case, we stop.
		\item
		Otherwise, there must be $ q' \in [t]  $ such that the set $ \lrb{y_{\hat p, q',1}, \ldots, y_{\hat p, q',c}} $
		is equal to the set of variables fixed in some $b_{q'}$-certificate but the actual assignment $x_{y_{\hat p, q',1}}, \ldots, x_{y_{\hat p, q',c}}$ is not a valid $b_{q'}$-certificate.
		
		In this case, we define that $a$ also fixes $x_{y_{\hat p, q, r}}$ for all 
		$q\in [t], r\in [c]$ to the values that they have in the input $(x, y)$.  
		Since this involves fixing $x_{y_{\hat p, q',1}}, \ldots, x_{y_{\hat p, q',c}}$
		which do not constitute a valid $b_{q'}$-certificate, this implies $f\CS(x, y)=0$.
	\end{enumerate}
	The first stage fixes at most $t c$ variables, the second stage fixes $tc$ variables and the last stage 
	also fixes $tc$ variables (some of which might have already been fixed in the 1st stage). Thus, the length of 
	the resulting 0-certificate $a$ is at most $3tc$.
	We   conclude that
	$ C_0(f\CS) \leq 3tc  $
	and also
	\[ 
	C(f\CS) \leq 3tc .
	\]

\end{proof}

\section{Conclusions}

A deterministic query algorithm induces a partition of the Boolean hypercube $\{0, 1\}^n$
into subcubes that correspond to different computational paths that the algorithm can take.
If $\A$ makes at most $k$ queries, each subcube is defined by
values of at most $k$ input variables. 

It is well known that one can also go in the opposite direction, with a quadratic loss. 
Given a partition of $\{0, 1\}^n$ into subcubes $S_i$ defined by 
fixing at most $k$ input variables with a function $f$ constant on every $S_i$, one can construct
a query algorithm that computes $f$ with at most $k^2$ queries \cite{JLV}. 

In this paper, we show that this transformation from partitions to algorithms is close to being optimal,
by exhibiting a function $f$ with a corresponding partition for which any 
deterministic query algorithm requires $\Omega(k^{2-o(1)})$ queries.
Together with the ``lifting theorem" of \cite{GPW15}, this implies a similar result 
for communication complexity: there is a communication problem $f$ for which the input set can be
partitioned into $2^k$ rectangles with $f$ constant on every rectangle but
any deterministic communication protocol needs to communicate 
$\Omega(k^{2-o(1)})$ bits.

An immediate open question is whether randomized or quantum algorithms (protocols) still require 
$\Omega(k^{2-o(1)})$ queries (bits). It looks plausible that the lower bound for deterministic query complexity
$D(f)$ for our construction can be adapted to randomized query complexity, with a constant factor loss every
time when we iterate our construction. If this is indeed the case, 
we would get a similar lower bound for randomized query algorithms. With randomized communication
protocols, the situation is more difficult because the $D^{cc}(f\circ g^n) = \Theta(D(f) \log n)$ result
of \cite{GPW15} has no randomized counterpart \cite{GJ+15}. 

In the quantum case, our composed function $f$ no longer requires $\Omega(k^{2-o(1)})$ queries 
because one could use Grover's quantum search algorithm \cite{Grover} to evaluate $AND_n$ and $OR_n$. 
Using this approach, we can show that the function $f^{(m)}$ of Lemma \ref{th:iteracija} can be computed with $O(n^{m-\frac{1}{2}})$ quantum queries
which is less than our bound on $D^{sc}(f)$. Generally, it seems that we do not know functions $f$ for which quantum query complexity
$Q(f)$ is asymptotically larger than $D^{sc}(f)$.

\begin{appendix}

\newpage
{\Huge Appendix}

\section{Proofs of technical lemmas}
\label{app:tech}

\begin{proof}[Proof of Lemma \ref{th:booleanization}]
If we have a query algorithm for $f$, we can replace each query querying a variable $y_i$ by $d=\lceil \log N\rceil$ queries querying
variables $y_{i1}, \ldots, y_{id}$ (and a computation of $h(y_{i1}, \ldots, y_{id}))$) and obtain an algorithm for $\tilde{f}$. 
Conversely, we can transform an algorithm computing $\tilde{f}$ into an algorithm computing $f$ by replacing a query to $y_{ij}$ by a 
query to $y_i$.

For certificate complexity measures ($C_a$ and $UP_a$, $a\in\{0, 1\})$, let
\[ x = (x_1, \ldots, x_n, h(y_{11}, \ldots, y_{1d}), \ldots, h(y_{k1}, \ldots, y_{kd})) \]
be the input to $f$ corresponding to an input $\tilde{x}=(x_1, \ldots, x_n, y_{11}, \ldots, y_{kd})$ to $\tilde{f}$. 
We can transform a certificate for the function $f$ on the input $x$ into a certificate for the function $\tilde{f}$ on the input $\tilde x$
by replacing each variable $y_i$ with $d=\lceil \log N\rceil$ variables $y_{i1}, \ldots, y_{id}$. This gives 
$C_a(\tilde f ) \leq  C_a(f)  \cdot  \lceil \log  N \rceil$. 

If the certificates for $f$ with which we start are unambiguous, then, for any two different certificates $I_1, I_2$, 
there is a variable in which they differ. If $I_1$ and $I_2$ differ in one of $x_i$'s, the corresponding certificates for $\tilde f$ differ 
in the same $x_i$. If $I_1$ and $I_2$ differ in one of $y_i$'s, the corresponding certificates for $\tilde f$ differ in at least one of $y_{ij}$
for the same $i$ and some $j\in [d]$. This gives
$UP_a(\tilde f ) \leq  UP_a(f)  \cdot  \lceil \log  N \rceil$. 
\end{proof}

\begin{proof}[Proof of Lemma \ref{th:product_partition}]
	For each $ i\in [m] $ we can express $ K_i $ as
	\[ 
	K_i  = \bigcup_{j=1}^{t_i}  S_{i,j},
	\]
	where $ S_{i,j}  \subset \Sigma_i^{n_i}$ are subcubes and for all $ j, j' \in [t_i] $ with $j\neq j'$ the subcubes are disjoint, i.e., $ S_{i,j} \cap S_{i,j'}  = \emptyset$.  Let $ a_{i,j}  : I_{i,j} \to \Sigma_i $, $ I_{i,j} \subset [n_i] $, $ \lrv{I_{i,j}} \leq d_i $, be the partial assignment, associated to the subcube $ S_{i,j} $, $ j \in [t_i] $, $ i \in [m] $.
	
	Let $ J = [t_1] \times [t_2] \times \ldots \times [t_m] $ and denote $ \mb j = (j_1, \ldots, j_m) \in J $.    For each $ k\in [m] $ denote $ N_k = n_1+n_2+\ldots+n_k $.
	Define a set $ S_{\mb j} $ for each $ \mb j \in J $ as
	\[ 
	S_{\mb j} = S_{1,j_1} \times S_{2,j_2} \times \ldots \times S_{m,j_m}.
	\]
	Fix any $  {\mb j} \in J$.
	Clearly, $ S_{\mb j}  \subset \Sigma_1^{n_1}\times \Sigma_2^{n_2}\times \ldots \times \Sigma_m^{n_m}   $.

	Denote $ I + k = \lrb{i + k \ \vline\  i \in I} $ for a  set $ I \subset \mbb R $ and $ k \in \mbb R $. 
	Notice that $ S_{\mb j} $ is a subcube, with the associated partial assignment $ A_{\mb j}  : I_{\mb j} \to  \bigcup_{i \in [m]} \Sigma_i $, where the set $ I \subset [N_m]  $ is defined as 
	\[ 
	I = I_{1,j_1} \cup  \lr{I_{2,j_2} + N_1}\cup  \lr{I_{3,j_3} +N_2} \cup \ldots \cup  \lr{I_{m,j_m} + N_{m-1}}
	\]
	and
	\[ 
	A_{\mb j}  (i) = 
	\begin{cases}
	a_{1,j_1}(i) , & i \in   I_{1,j_1} \subset   [N_1]  ,\\
	a_{2,j_2}(i-N_1) , &  i \in   I_{1,j_1}+N_1 \subset   [N_1+1 \tdots N_2] ,\\
	\ldots \\
	a_{k,j_k}(i-N_{k-1}) , & i \in I_{k,j_k} + N_{k-1}\subset  [N_{k-1}+1 \tdots N_k],\\
	\ldots \\
	a_{m,j_m}(i-N_{m-1}) , & i \in I_{m,j_m} + N_{m-1}\subset  [N_{m-1}+1 \tdots N_m] .
	\end{cases}
	\]
	We also observe that the length of the assignment $A_{\mb j}$ defining $ S_{\mb j} $ is
	\[ 
	\lrv{I_{\mb j}} = \lrv{I_{1,j_1}}+\lrv{I_{2,j_2}}+ \ldots + \lrv{I_{m,j_m}} \leq d_1 + d_2+ \ldots+ d_m.
	\]
	
	
	Finally, $ S_{\mb j} $, $ \mb j \in J $ define a partition of the  whole space $ \Sigma_1^{n_1}\times \Sigma_2^{n_2}\times \ldots \times \Sigma_m^{n_m}  $
into disjoint subcubes. To see that, fix any $ x=(x_1,\ldots, x_m) $ with $ x_i \in \Sigma_i^{n_i} $, $ i\in [m] $. Since $ S_{i,j} $, $ j\in [t_i] $, partition the space $ \Sigma_i^{n_i} $, there is a unique $ j_i \in [t_i] $ such that $ x_i \in \Sigma_i^{n_i} $, for all $ i \in [m] $. But then  there is a unique $ \mb j = (j_1,\ldots, j_m) \in J $ such that $ x \in S_{\mb j} $.
	
	
\end{proof}

\section{Proofs of Lemmas \ref{th:AND_OR_composition}-\ref{th:iteracija}}
\label{app:main}

\begin{proof}[Proof of Lemma \ref{th:AND_OR_composition}]
	The equalities $  D(f_{and}) = D(f_{or}) = n D(g)  $ immediately follow from \cite[Lemma 3.1]{Tal2013}.
	The equalities $ C_0 (f_{or}) =n  C_0(g) $ and $  C_1(f_{or}) =  C_1 (g) $ have been shown in \cite[Proposition 31]{Gilmer2013}.
	
	Since $ f_{and}= \neg OR_n \circ (\neg g)^n $, also $ C_0 (f_{and}) = C_0(g) $ and $  C_1(f_{and}) = n C_1 (g) $ follow from \cite[Proposition 31]{Gilmer2013}. It remains to show the upper bounds on
	$ UP_0 (f_{or})   $ and $ UP_1(f_{or})  $  (the proof for $ f_{and} $ is similar).

	Let $ UP_0 (g) = u_0 $ and $ UP_1 (g) = u_1 $. Then, for each $ b\in \BC $, $ g^{-1}(b) $ can be partitioned into disjoint subcubes defined by assignments of length at most $ u_b $.
	
	For an input $x\in \BC^{Nn}$, let $x=(x^{(1)}, \ldots, x^{(n)}), x^{(i)}\in \BC^N$.
	We have $ f_{or}(x)  =0 $  iff $ g(\tilde x^{(k)}) = 0 $ for all $ k \in [n] $. Hence
	\[ 
	f_{or}^{-1} (0) = \lr{g^{-1}(0)}^n.
	\]
	By Lemma \ref{th:product_partition}, $ f_{or}^{-1} (0)  $ can be partitioned into disjoint subcubes  defined by assignments of length at most $ n u_0 $. Thus, $ UP_0 (f_{or})  \leq n UP_0 (g) $.

For $f_{or}^{-1} (1)$, we have
	\[ 
	f_{or}^{-1} (1) = \bigcup_{j=1}^n K_j , 	K_j = \lr{g^{-1}(0)}^{j-1} \times g^{-1}(1) \times \BC^{(n-j)N}  \subset \BC^{Nn} .
	\] 
	We note that the sets $ K_j $ are disjoint (since each $K_j$ consists of all $ x \in \BC^{Nn} $ for which
$j$ is the smallest index with $g(x^{(j)})=1$). 
	By  Lemma \ref{th:product_partition}, $\lr{g^{-1}(0)}^{j-1} \times g^{-1}(1)$ can be partitioned into disjoint subcubes defined by assignments of length at most $ (j-1) u_0 + u_1$. This induces a partition of $K_j$ into subcubes
defined by the same assignments.
	
Hence, $  f_{or}^{-1} (1)  $ can be partitioned into disjoint subcubes defined by assignments of length at most $ (n-1)u_0 + u_1 $. That is, $ UP_1(f_{or}) \leq   (n-1)  UP_0(g) +  UP_1 (g)  $.

	In particular, this implies that $ D^{sc}(f_{or}) \leq n D^{sc} (g) $; note that this also follows from \cite[Proposition 3]{Kothari}.
\end{proof}

\begin{proof}[Proof of Lemma \ref{th:block}]
By Lemma \ref{th:AND_OR_composition},
\begin{align*}
&   D(h) =  D(OR_n) \cdot D(f) = n D(f),\\  
&    D^{sc}(h) \leq   D^{sc}(OR_n) \cdot D^{sc}(f) = n D^{sc}(f),\\ 
&    C_0(h) =n C_0(f)  \leq c,\\ 
&    C_1(h) = C_1(f)  \leq c . 
\end{align*}
By Lemma \ref{th:sh_properties}, 
\begin{align*}
& D(h^{t,c}\CS)  \geq  0.5D(h) = 0.5  tn D(f), \\
& D(h^{t,c}\CS)  \leq    t D(h) + tc= tn D(f)  + tc, \\
& UP_1(h^{t,c}\CS) \leq 2tc, \\
& UP_0(h^{t,c}\CS) \leq t  D^{sc}(h)  + 2tc \leq  t n D^{sc}(f)  + 2tc , \\
& C(h^{t,c}\CS) \leq 3tc.
\end{align*}
By Lemma \ref{th:booleanization},
\begin{align*}
&  0.5 tn  D(f) \leq  D(\widetilde {h}\CS)  \leq    t(n D(f)  +c) \lceil \log (Nn) \rceil,  \\
&  UP_1(\widetilde {h}\CS) \leq 2tc  \lceil \log (Nn) \rceil,  \\
& UP_0(\widetilde {h}\CS) \leq (  t n D^{sc}(f)  +2 tc)  \lceil \log (Nn) \rceil , \\
& C(\widetilde {h}\CS) \leq 3tc   \lceil \log (Nn) \rceil  .
\end{align*}
Finally, again by Lemma \ref{th:AND_OR_composition},
\begin{align*}
& D(f') = n D(\widetilde {h}\CS), \\
& C_0(f')  =  C_0(\widetilde {h}\CS) , \\
&  C_1(f')  =  nC_1(\widetilde {h}\CS) , \\
& UP_1(f') \leq  n UP_1(\widetilde {h}\CS) , \\
& UP_0(f') \leq (n-1)UP_1(\widetilde {h}\CS) + UP_0(\widetilde {h}\CS).
\end{align*}
Consequently, we have the following estimates:
\begin{align*}
&    0.5 tn^2 D(f) \leq  D(f') \leq     tn(n D(f)  +c) D(f)   \lceil \log (Nn) \rceil, \\
& C_0(f')   \leq  3tc  \lceil \log (Nn) \rceil, \\
&  C_1(f') \leq  3tcn   \lceil \log (Nn) \rceil , \\
&  UP_1(f') \leq 2tcn   \lceil \log (Nn) \rceil , \\
&  UP_0(f') \leq \lr{2 tcn+ t n D^{sc}(f) } \lceil \log (Nn) \rceil  .
\end{align*}
The last two inequalities imply  $ D^{sc}(f')  = \max \lrb {UP_1(f'),UP_0(f')}  \leq   \lr{2 tcn+ t n D^{sc}(f) } \lceil \log (Nn) \rceil  $, concluding the proof.
\end{proof}

\begin{proof}[Proof of Lemma \ref{th:iteracija}]
The proof is by induction on $ m $. When $ m=1 $, take 	$ a_0 = a_1= a_2=a_3 =1 $. Then for any $ n \geq 2 $ one can choose $ N=n $, $  f^{(1)}  = AND_n $.

Suppose that for  $ m=b $ there are constants $ a_0 $, $ a_1 $, $ a_2 $, $ a_3 $ s.t. for all $ n \geq 2$ there is 
$ f^{(m)} : \BC^N \to \BC$, satisfying the given constraints. We argue that for $ m=b+1 $ there are positive integers
\begin{align*} 
& a_0' =   4 (a_2 + 4b)  a_0+ 128 a_2^4  a_3(a_2 + 4b)(a_0 +19b-10) , \\
&  a_1'  = 2 (2b-1)a_1 , \\
& a_2'  =  4(a_2+a_3)(a_2 + 4b)   (a_0 +19b-10)    , \\
&  a_3'  = 12 a_3  (a_2 + 4b )  (a_0 +19b-10) 
\end{align*}
s.t. for all $ n \geq 2 $ there is  $ N' \in \mbb N $ and $ f' : \BC^{N'} \to \BC $, satisfying  the necessary properties.

We fix   $ n \geq 2 $. Let $ f^{(b)}  $ be given by the inductive hypothesis. 
Let $ f^{(b+1)} =f'$ where $f'$ is obtained by applying Lemma \ref{th:block} to $ f=f^{(b)}$, with parameters $ {t= \lceil 4 \log \lr{ a_2 \,  n^{2b-1} \log^{2b-2}(n)}  \rceil +4}  $ and   $ {c=  \max \lrb{nC_0(f^{(b)}), C_1(f^{(b)})}   }$. 
Notice that for all $ D \geq 2 $ setting $ t \geq 4 \log (D) + 4  $ yields $ t  \geq 2\log (tD) $. Hence 
the chosen value of $ t $ satisfies  $ t \geq  2\log \lr{ t D(f^{(b)})} $ and
we have
$  f^{(b+1)} :  \BC^{N'}   \to \BC$, where  
\[ 
N'  = tNn^2 + 2^t tc n \lceil \log (Nn) \rceil ,
\]  
and,  by Lemma \ref{th:block}, 
\begin{align*}
& 0.5 tn^2 D(f^{(b)})     \leq D(f^{(b+1)})  \leq   tn  (c + nD(f^{(b)}))\lceil \log (Nn) \rceil ; \\
& D^{sc}(f^{(b+1)})  \leq     tn\lr{2 c + D^{sc}(f^{(b)}) } \lceil \log (Nn) \rceil   ; \\
& C_0(f^{(b+1)}) \leq 3tc \lceil \log (Nn) \rceil; \\
& C_1(f^{(b+1)})  \leq 3tcn \lceil \log (Nn) \rceil . 
\end{align*}

Notice that
\begin{align*}
& \lceil \log (Nn) \rceil \leq   (a_0 +19b-10)   \log  (n); \\
& t=4+ \lceil 4 \log \lr{a_2 \,  n^{2b-1} \log^{2b-2}(n)}  \rceil \leq  4 (a_2 + 4b)   \log  (n); \\
& t  > 4\log \lr{a_2 \,  n^{2b-1} \log^{2b-2}(n)}  \geq  4(2b-1 )\log (n);\\
& 2^t \leq  32  a_2^4 \,  n^{8b-4} \log^{8b-8}(n);\\
& c \leq  a_3 \,  n^{b} \log^{2b-2}(n) ; \\ 
& 2 c + D^{sc}(f^{(b)})  \leq 3a_3 \,  n^{b} \log^{2b-2}(n) ;\\
& c + nD(f^{(b)}) \leq   (a_2 + a_3)\,  n^{2b} \log^{2b-2}(n)
\end{align*}
We conclude that
\begin{multline*} 
N' 
\leq  
 4 (a_2 + 4b)   \log  (n) 
a_0 \,  n^{9b+2}  \log^{10b-10} (n )
+  \\  
32  a_2^4 \,  n^{8b-4} \log^{8b-8}(n) 
 4 (a_2 + 4b)   \log  (n)  
a_3 \,  n^{b} \log^{2b-2}(n) n 
(a_0 +19b-10)   \log  (n) 
=\\
 4 (a_2 + 4b)  a_0 \,  n^{9b+2}  \log^{10b-9} (n)
+
128 a_2^4  a_3(a_2 + 4b)(a_0 +19b-10)  n^{9b-3}\log^{10b-8}(n) \leq \\
a_0' \, n^{9b+9}  \log^{10b}(n).
\end{multline*}
Similarly, 
\begin{align*}
& D(f^{(b+1)})  \leq  
4(a_2 + 4b)  \log  (n) 
n  \,
 (a_2+a_3)\,  n^{2b} \log^{2b-2}(n)   
 (a_0 +19b-10)   \log  (n)  = 
a_2'  \, n^{2b+1} \log^{2b}(n), \\
& D(f^{(b+1)}) \geq   2(2b-1)   \log  (n)  n^2 \, a_1\,  n^{2b-1} \log^{b-1}(n) =   a_1' \,  n^{2b+1} \log^{b}(n) , \\
& D^{sc}(f^{(b+1)}) \leq 12 (a_2 + 4b )   \log  (n)    n \,   a_3  \, n^{b} \log^{2b-2}(n)  (a_0 +19b-10)   \log  (n)  = a_3'  \, n^{b+1} \log^{2b}(n),\\ 
& 	C_0(f^{(b+1)}) \leq   12(a_2 + 4b )   \log  (n) a_3 \,     n^{b} \log^{2b-2}(n)      (a_0 +19b-10)   \log  (n) =  a_3'  n^{b} \log^{2b}(n), \\
& C_1(f^{(b+1)}) \leq   12(a_2 + 4b )   \log  (n) a_3 \,  n^{b+1} \log^{2b-2}(n) (a_0 +11b-10)   \log  (n) =  a_3'  n^{b+1} \log^{2b}(n),
\end{align*}
completing the inductive step.
\end{proof}

\end{appendix}
\end{document}